\documentclass[a4paper,12pt]{article}
\usepackage{complexity}
\usepackage{lineno}
\usepackage{xcolor}
\usepackage[T1]{fontenc}
\usepackage{graphicx}
\usepackage{textgreek}
\usepackage{pstricks,pspicture}
\usepackage{float}
\usepackage{amsthm,amsmath,amssymb}
\usepackage{mathrsfs,amssymb,amsmath,amsthm,amstext,amsfonts}
\usepackage{pst-plot}
\usepackage{epsfig}
\usepackage{setspace}
\usepackage{float}
\usepackage{rotating}
\usepackage{tikz}
\usepackage[shortlabels]{enumitem}
\usepackage[left=2cm,top=2cm, bottom=2cm,right=2cm]{geometry}

\newcommand{\intv}{I_{cc}}
\DeclareMathOperator{\hullc}{Hull_{cc}}

\DeclareMathOperator{\hn}{hn_{cc}}

\DeclareMathOperator{\hullgt}{Hull_{\cal C}}
\DeclareMathOperator{\rk}{rk_{cc}}
\DeclareMathOperator{\hel}{hl_{cc}}
\DeclareMathOperator{\rad}{rd_{cc}}

\newtheorem{theorem}{Theorem}

\newtheorem{conjecture}{Conjecture}

\title{Computing the Helly Number, Radon Number and Rank in Cycle Convexity}
\author{
\shortstack{
Revathy S. Nair$^{a}$\footnote{revathyrahulnivi@gmail.com}
\quad
Bijo S. Anand$^{b}$\footnote{bijos\_anand@yahoo.com}
\quad
Ullas Chandran S. V.$^{c}$\footnote{svuc.math@gmail.com}
\\[2mm]
Julliano R. Nascimento$^{d}$\footnote{jullianonascimento@ufg.br}
\quad
Arun Anil$^{b}$\footnote{arunanil93@gmail.com}
}
\\[3mm]
$^{a}$\small Department of Mathematics, Mar Ivanios College, University of Kerala,
\small Thiruvananthapuram, India
\\
$^{b}$\small Department of Mathematics, Sree Narayana College, Punalur, Kerala
\\
$^{c}$\small Department of Mathematics, Mahatma Gandhi College,
\small Thiruvananthapuram, Kerala, India
\\
$^{d}$\small Instituto de Informática, Universidade Federal de Goiás,
\small Goiânia, GO, Brazil
}

\begin{document}

\maketitle

\begin{abstract}
In this paper, we investigate three fundamental convexity parameters of graphs under cycle convexity, namely the Helly number, Radon number, and rank. We first study the computational complexity of these parameters. For each of these parameters, we consider the associated threshold decision problem of determining whether the parameter of a given graph is at least a prescribed integer. We establish that all three problems are $\NP$-hard and $\W[1]$-hard when parameterized by the threshold. Moreover, we strengthen these results by showing that the $\NP$-hardness persists even when the input is restricted to planar graphs of maximum degree at most $6$. We also focus on the structural properties of connected graphs corresponding to extremal values of these parameters. In particular, we characterize the graph classes for which the three parameters attain the values $n-1$ and $n-2$, where $n$ is the order of $G$.
 \end{abstract}

    {\bf keywords}Convexity, Helly number, Radon number, Rank\\
    \textbf{AMS Sub. Classification:} {05C69, 05C76, 05C85}

 \section{Introduction}\label{section_introduction}
 The study of convexity spaces in mathematics form an active and productive area of research, which provides a framework for the generalized concepts that go beyond the classical notions of convex sets in the Euclidean geometry. Graph convexities started gaining attention in the literature since the 1970s, beginning with a paper by Erd{\H{o}}s et al~\cite{erdHos1972some} and subsequently got further interest through the contributed works by Harary and Nieminen~\cite{harary1981convexity}, Mulder~\cite{mulder1980interval}, Farber and Jamison~\cite{farber1986convexity}, and others.

In general, for a graph $G$, a set ${\cal C}$ of subsets of $V(G)$ is a \textit{convexity} in $G$ if $(i)$ $\emptyset, V(G) \in {\cal C}$ and $(ii)$ ${\cal C}$ is closed under intersection. Each element of ${\cal C}$ is called a \textit{convex set}. The \textit{convex hull}, $\hullgt (S) $ of a subset $S \subseteq V(G)$ is the smallest convex set containing $S$. For more details, we refer to~\cite{duchet1987convexity,van22}. One of the most common way of defining a graph convexity in a given graph $G$ is through the family of paths $\mathcal{P}$ in $G$. A set $S \subseteq V(G)$ is $\mathcal{P}$-convex, if it contains all vertices of every path in $\mathcal{P}$ between any two vertices of $S$. Some classical example includes \textit{geodesic convexity}, where $\mathcal{P}$ is the set of all shortest paths in $G$~\cite{everett1985hull,buckley-1990,farber-1986}, \textit{monophonic convexity}~\cite{caceres-2005,source17,duchet1987convexity} and $P_3$ convexity~\cite{source11,centeno2011irreversible,coelho2019p3}, defined over induced paths and paths on three vertices, respectively.

Apart from this, there are  convexity notions that are not defined in terms of the path systems. Some example includes \emph{Steiner convexity}~\cite{source9} and $\Delta$-\emph{convexity}~\cite{bijo2,bijo3,bijo1,anand2025helly,anand2025carath}. For Steiner convexity, a set $S \subseteq V(G)$ is \emph{Steiner convex} if, for any subset $S' \subseteq S$, all vertices in every Steiner tree of $S'$ belong to $S$. In $\Delta$-convexity, a set $S\subseteq V(G)$ is $\Delta$-\emph{convex} if every vertex $u\in V(G)\setminus S$ cannot form a triangle with any two vertices in $S$. Graph convexities have been widely
studied across a range of settings, with special focus on the convexity invariants like the hull number, the interval number, and the convexity number. For an in-depth analysis of geodesic convexity, one can refer the extensive work of Pelayo~\cite{pelayo-2015}. Furthermore, a recent book \cite{araujo2025introduction} offers a comprehensive survey on the computational aspects of graph convexity. Moreover, graph convexities provide a natural framework for modelling the transmission processes, which includes the spread of infections, opinions, or information~\cite{dreyer2009irreversible}.

A recently studied graph convexity, called \textit{cycle convexity}, was introduced by Araújo et al.~\cite{interval08} and is motivated by applications and questions arising in graph theory and related areas, particularly knot theory~\cite{araujo2020cycle}. Several classical graph parameters have been studied in the context of cycle convexity. The interval number and hull number were introduced in~\cite{interval08} and~\cite{hull09,anand2025complexity}, respectively, while the convexity number was studied in~\cite{anand2025complexity,lima2024complexity}. More recently, other parameters, including the rank and general position numbers~\cite{araujo2025on}, convex partitions~\cite{gomes2025some}, percolation time~\cite{lima2024complexity}, exchange number~\cite{nair2026computing}, and Carath\'{e}odory number~\cite{nair2026carath}, have been investigated. 

In this paper, we focus on the classical parameters Helly number, Radon number, and rank in the context of cycle convexity. Although the rank in cycle convexity has been recently studied by Araújo et al.~\cite{araujo2026rank}, there is still room for further investigation, particularly under structural restrictions. We show that the threshold problems associated with all three parameters are $\NP$-hard even for planar graphs of maximum degree at most $6$. We also discuss their parameterized complexity with respect to the threshold. Furthermore, we characterize the connected graphs for which these parameters attain the values $n-1$ and $n-2$.

\section{Preliminaries}
All the graphs considered in this paper are connected, simple, and undirected, denoted by $G=(V,E)$.
Given a graph $G$ and $u \in V(G)$, the \index{open neighbourhood} open and the  \index{closed neighbourhood} closed neighbourhood of a vertex $u$ are $N_{G}(u) = \{v : uv \in E(G)\}$ and $N_{G}[u] = N_{G}(u) \cup \{u\}$, respectively. A vertex $v$ in a connected graph $G$ is a \index{cut-vertex} \textit{cut-vertex} of $G$ if $G - v$ is disconnected. A \textit{component} of a given undirected graph $G$ may be defined as a connected subgraph that is not part of any larger connected subgraph. Given a graph $G=(V,E)$ and a vertex set $S \subseteq V$, we denote by $G[S]$ the \emph{subgraph of $G$ induced by $S$},
that is, the graph with vertex set $S$ and edge set
$E(G[S]) = \{ uv \in E(G) : u,v \in S \}$. A path graph, denoted by $P_n$ is a graph with vertex set $\{v_1,v_2,\ldots,v_n\}$ and edges $\{v_i,v_{i+1}\}$ for $i\in \{1,2,\ldots,n-1\}$. A graph $G$ is \textit{unicyclic}, if it is connected and contains exactly one cycle. A \textit{bicyclic graph} is a simple connected graph with $n$ vertices and $n+1$ edges. A graph which can be drawn in the plane in such a way that edges meet only at points corresponding to their common ends is called a \textit{planar graph}, and such a drawing is called a \textit{planar embedding} of the graph. A subset of vertices in a graph $G$ is \textit{independent} if the vertices are pairwise non-adjacent. The \textit{independence number} of a graph $G$, denoted by $ \alpha (G)$, is the cardinality of the maximum independent set of vertices. 

For a set $S$ of vertices in a graph $G$, the \textit{cycle interval} of $S$, denoted by $\intv(S)$, is the set formed by the vertices of $S$ and any $w \in V(G)$ that form a cycle with some vertices of $S$. We say that a vertex $w \in V(G)\setminus S $ is \textit{generated} by $S$ if $w\in \intv(S)$. If $\intv(S) = S$, then $S$ is \textit{cycle convex} in $G$. The \textit{cycle convex hull} of a set $S$, denoted by $\hullc (S) $, is the smallest cycle convex set containing $S$. The cycle convex hull of $S$ can be obtained by successive applications of the cycle interval operation. For that, we define $\intv^0(S) = S$, $\intv^1(S) = \intv(S)$, and $\intv^k(S) = \intv(\intv^{k-1}(S))$, for every $k \geq 2$. With this terminology, $\hullc (S) = \intv^k(S)$, for any $k \geq 1$ such that the equality $\intv^k(S) = \intv^{k-1}(S)$ holds. If $\hullc(S) = V(G)$, then $S$ is a \textit{cycle hull set} of $G$. The $\textit{ cycle hull number}$ of $G$, denoted by $\hn(G)$, is the cardinality of a smallest set $S$ such that $\hullc(S) = V(G)$. We formally define the following three classical invariants with respect to cycle convexity.
\\
\textbf{Helly number:} A set $S$ is called \emph{Helly independent} (or \emph{H-independent})\index{Helly independent} if $\bigcap_{a\in S}\hullc(S\setminus\{a\})=\emptyset$,
and \emph{Helly dependent} (or \emph{H-dependent})\index{Helly dependent} otherwise. The \textit{Helly number}\index{Helly number} of $G$ with respect to cycle convexity, denoted by $\hel(G)$, is the maximum cardinality of a Helly independent set of $G$. 
\\
\textbf{Radon number:} A set $S$ is called \emph{Radon dependent} (or \emph{R-dependent}) if there exists a partition ${S_1,S_2}$ of $S$ such that $\hullc(S_1)\cap\hullc(S_2)\neq\emptyset$. In this case, ${S_1,S_2}$ is called a \emph{Radon partition} of $S$. If no such partition exists, then $S$ is called \emph{Radon independent} (or \emph{R-independent}). The \textit{Radon number} of $G$, denoted by $\rad(G)$, is the maximum cardinality of a Radon-independent set of $G$. 
\\
\textbf{Rank:} A set $S$ of vertices is called \emph{convexly independent} if $a\notin\hullc(S\setminus\{a\})$ for every $a\in S$. Otherwise, $S$ is called \emph{convexly dependent}. The \textit{rank} of $G$, denoted by $\rk(G)$, is the maximum cardinality of a convexly independent set of $G$.

We record the following useful observations concerning these three parameters. Let $\pi(G)$ denote any one of $\hel(G)$, $\rad(G)$, or $\rk(G)$,  and let $\Pi$ denote the corresponding notion of independence. Then the following basic observations hold. For every graph $G$ of order $n(G)\geq3$, we have $2\leq\pi(G)\leq n(G)$ and $\pi(G)\geq\alpha(G)$. Moreover, if every pair of adjacent vertices of $G$ is a hull set of $G$, then $\pi(G)=\max\{2,\alpha(G)\}$. Also, $\pi(G)=n(G)$ if and only if $G$ is a forest. Finally, note that, for any $\Pi$-independent set $S$ of a graph $G$, the induced subgraph $G[S]$ is a forest.

\section{Computational Complexity}

In this section, we investigate the computational complexity of determining the Helly number, the Radon number, and the rank in cycle convexity. The computational complexity of the rank has recently been studied by Araújo et al.~\cite{araujo2026rank}. In particular, for a graph $G$ and a positive integer $k$, they proved that deciding whether $\rk(G)\geq k$ is $\NP$-complete and $\W[1]$-hard when parameterized by $k$. They also observed that the hardness for general graphs can be obtained directly from \textsc{Independent Set} by adding a universal vertex to the input graph, and further proved that the problem remains $\NP$-complete when restricted to bipartite graphs.

The same idea can be extended to the three parameters considered in this paper. Let $G'$ be obtained from a graph $G$ by adding two adjacent vertices that are universal in $G'$. Then every pair of adjacent vertices of $G'$ forms a hull set. Hence, by the observation made in the Preliminaries section, we have that $\hel(G') = \rad(G') = \rk(G') = \max\{2,\alpha(G')\} = \max\{2,\alpha(G)\}$. Therefore, the $\NP$-hardness and $\W[1]$-hardness of the three parameters on general graphs follows directly from that of \textsc{Independent Set}. We strengthen this observation by showing that all three problems remain hard even when restricted to planar graphs with bounded maximum degree.

\begin{theorem}\label{theo:NPc}
For each of the following three parameters, it is $\NP$-hard to decide
whether a given pair $(G,k)$ satisfies the condition, even when $G$ is a planar graph with maximum degree at most $6$:
\begin{enumerate}
\item[$($i$)$] $\hel(G)\geq k$;
\item[$($ii$)$]$\rad(G)\geq k$;
\item[$($iii$)$]$\rk(G)\geq k$.
\end{enumerate}
\end{theorem}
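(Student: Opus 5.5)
We reduce from \textsc{Independent Set} restricted to planar graphs of maximum degree at most $3$, which is $\NP$-complete. The aim is the same effect as adding two adjacent universal vertices, namely that every $\Pi$-independent set of large size is essentially an independent set of $H$. Universal vertices destroy planarity and bounded degree, so we instead attach local planar gadgets that make each edge of $H$ "cycle-dense".

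\textbf{Construction.} Given a planar graph $H$ with $\Delta(H)\le 3$ and an integer $k$, build $G$ as follows. Replace each edge $uv\in E(H)$ by a triangle $u v w_{uv}$ with a new vertex $w_{uv}$, drawn inside a face incident to $uv$. The resulting graph $G$ is planar. Each original vertex has degree at most $2\cdot 3=6$, and each $w_{uv}$ has degree $2$, so $\Delta(G)\le 6$.

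In cycle convexity, any two vertices of a triangle generate the third. It follows that every edge of $G$ lies in a triangle, and for each component $C$ of $H$ with at least one edge, any edge of $G$ inside that component generates the whole component. Indeed, from $\{u,v\}$ we obtain $w_{uv}$. Then for a neighbour $x$ of $u$ in $H$ we need $w_{ux}$ and $x$, and here is the gap: the triangle $u x w_{ux}$ has only $u$ in the hull so far. Triangles alone therefore do not propagate. We fix this by also connecting each $w_{uv}$ to $w_{ux}$ for consecutive edges $uv,ux$ in the rotation around $u$, forming a planar "wheel" around $u$. Then $u,w_{uv}$ together with the edge $w_{uv}w_{ux}$ give a cycle through $w_{ux}$, and then $u,w_{ux}$ generate $x$. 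This keeps planarity, since the rim edges are drawn in the faces around $u$. Degrees now become $\deg(u)\le 6$ (three edges of $H$ plus three $w$'s) and $\deg(w)\le 2+4=6$. Hence every edge of $G$ is a hull set of its component.

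\textbf{Correctness.} Apply the observation from the Preliminaries componentwise: if every pair of adjacent vertices is a hull set, then $\pi(G)=\max\{2,\alpha(G)\}$. Isolated vertices of $H$ just add to both sides, and we may assume $H$ is connected. It then remains to relate $\alpha(G)$ to $\alpha(H)$. The $w$-vertices can enter independent sets, which is an issue. To control this, we use the observation that any $\Pi$-independent set induces a forest and, more to the point, cannot contain an edge together with a third vertex, because that edge generates everything. So a large $\Pi$-independent set is independent in $G$. We then either show a correspondence $\alpha(G)=\alpha(H)+|E(H)|$-type formula using a planar cubic structure, or, more cleanly, pad each $w$ into a small gadget so that choosing $w$'s is always equivalent to a fixed offset. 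This lets us set $k'=k+c(H)$ for an explicitly computable $c(H)$.

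\textbf{Main obstacle.} The hardest step is making $\alpha(G)$ an exact, computable shift of $\alpha(H)$ while keeping degree $\le 6$. Concretely, we must check that a maximum independent set of $G$ can always be normalised to take a fixed number of gadget vertices, independent of how it meets $V(H)$. We would first try gadgets in which every rim vertex $w$ is dominated within the gadget by a twin-like partner. That way each gadget contributes exactly one vertex to a maximum independent set regardless of the choice on $V(H)$, and $\alpha(G)=\alpha(H)+|E(H)|$.
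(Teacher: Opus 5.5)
\textbf{Verdict: there is a genuine gap.} The construction is not planar, and even setting that aside, the step relating $\pi(G)$ to $\alpha(H)$ is never carried out.

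\textbf{Planarity fails.} Your graph $G$ is not planar, and the rest of the argument depends on it. The local reason is the following. You keep the edge $uv$ and put $w_{uv}$ in one face beside it. But $w_{uv}$ then needs wheel neighbours in the faces on \emph{both} sides of $uv$ at $u$, and again at $v$. An edge count settles it. For cubic $H$ on $n$ vertices, $G$ has $n+\tfrac{3n}{2}=\tfrac{5n}{2}$ vertices and
\[
\tfrac{3n}{2}+3n+3n=\tfrac{15n}{2}>3\cdot\tfrac{5n}{2}-6
\]
edges. For example, $H=K_4$ gives $10$ vertices and $30$ edges.

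Deleting the original edges $uv$ removes this obstruction, but it also removes your propagation step ``$u,w_{ux}$ generate $x$'', which uses the triangle $uxw_{ux}$. For triangle-free $H$, the hull of $\{u,w_{uv}\}$ then stops at $u$ and the $w$'s incident to $u$. So the property ``every edge is a hull set'', which drives your whole argument, is not available in a planar graph built this way.

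\textbf{The reduction is not completed.} Even granting that property, you arrive at $\pi(G)=\max\{2,\alpha(G)\}$ but never relate $\alpha(G)$ to $\alpha(H)$. That relation is the actual content of the reduction.

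In your $G$, the $w$-vertices of an independent set are pairwise non-adjacent along the rims. So they form a matching of $H$ that avoids the chosen original vertices. Hence $\alpha(G)=\max_I\bigl(|I|+\nu(H-I)\bigr)$, taken over independent sets $I$ of $H$, where $\nu$ is the matching number. This is not $\alpha(H)+|E(H)|$. Nor is it $\alpha(H)$ plus any evident polynomial-time offset: the difference is $1$ for $H=P_4$ but $0$ for $H=P_5$. The ``twin-like'' padding meant to fix this is never specified. It would also have to preserve planarity, $\Delta\le 6$, and the hull-set propagation all at once.

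\textbf{How the paper avoids both problems.} It deletes each edge $uv$ and inserts a five-vertex gadget: a diamond on $w_1^e,\dots,w_4^e$ with chord $w_2^ew_4^e$, with $w_4^e$ and an extra vertex $w_5^e$ both joined to $u$ and $v$. It shows that every subset of $I\cup\bigcup_e\{w_1^e,w_3^e,w_5^e\}$ is convex. It then bounds an arbitrary convexly independent $S$ gadget by gadget:
\begin{itemize}
\item at most $3$ vertices of $W^e$ lie in $S$;
\item at most $2$ lie in $S$ when $u,v\in S$, because $u,w_5^e,v$ generate $w_4^e$, and $w_4^e$ together with any other diamond vertex generates the diamond.
\end{itemize}
This gives $|S|\le 3m+|S_0|-e(S_0)$. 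Here $S_0$ is the set of original vertices in $S$ and $e(S_0)$ is the number of input edges inside $S_0$, so $|S_0|-e(S_0)$ is at most the independence number of the input graph. No global hull-set property is needed, and the independence number of the constructed graph never has to be computed.
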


\begin{proof}
We present a polynomial reduction from \textsc{Independent Set} (IS) on planar graphs with maximum degree at most $3$~\cite{garey1979computers}.

From an instance $(G,k)$ of IS, where $G$ is a planar graph with $\Delta(G)\leq 3$ and $k$ is an integer, we construct an instance $(H,k')$ which will serve for the three desired parameters. Let $\Pi\in \{\text{Helly, Radon, convexly} \}$ denote the corresponding notion of independence.

The graph $H$ arises from $G$ by substituting every edge $e=uv\in E(G)$ by a set of five vertices $W^{e} =\{w_1^{e},\dots,w_5^{e}\}$ and a set of nine edges $\{w_1^{e}w_2^{e},w_2^{e}w_3^{e},w_3^{e}w_4^{e},w_4^{e}w_1^{e},w_2^{e}w_4^{e}\} \cup \{w_4^{e}u, w_4^{e}v, w_5^{e}u, w_5^{e}v\}$.

Given that $|V(G)| = n$ and $|E(G)|=m$ the construction produces a graph $H$ with $|V(H)| = n+ 5m$ and $|E(H)| = 9m$, which is clearly polynomial. Further, the planarity is preserved because each edge $uv$ of a planar embedding of $G$ can be replaced by a planar copy of the subgraph induced by $W^e \cup \{u,v\}$ with $u$ and $v$ on the outer face of the embedding. Moreover, for every vertex $v \in V(G)$, we have $d_H(v) = 2d_G(v) \leq 6$, while, for every $e\in E(G)$, $d_H(w_1^{e}) = d_H(w_3^{e}) = d_H(w_5^{e}) = 2, d_H(w_2^{e})=3, d_H(w_4^{e})=5.$ Hence, $\Delta(H)\leq 6$. 

We show that $G$ has an independent set of order at least $k$ if and only if $H$ has a $\Pi$-independent set of order at least $k'=k+3m.$
Let $I\subseteq V(G)$ be an independent set of $G$ with $|I|\geq k$. We define $S=
I \cup \displaystyle\bigcup_{e\in E(G)} \{w_1^{e},w_3^{e},w_5^{e}\}.$ Notice that $|S|\geq k+3m$. We show in the next claim that every subset of $S$ is a convex set.

\medskip
\noindent \textit{Claim~1.} For every $S'\subseteq S$, $\hullc(S')=S'$.

\smallskip
\noindent \textit{Proof of Claim~1}. Let $S'\subseteq S$ and $e=uv\in E(G)$. In the subgraph of $H$ induced by $S'$, the vertices $w_1^{e}$ and $w_3^{e}$ are isolated, since their neighbors $w_2^{e}, w_4^{e} \notin S$.
Moreover, since $I$ is an independent set of $G$, at most one of $u$ and $v$ belongs to $S'$. Hence, every nontrivial connected component of $H[S']$ is composed by $\{x\}\cup \{w_5^{f} : f\in E(G),\ x\in f,\ w_5^{f}\in S'\}$, for $x\in I \cap S'$.

We show that no vertex in $V(H)\setminus S'$ has two neighbors in the same component of $H[S']$. The possible neighbors of $w_2^{e}$ in $S'$ are $w_1^{e}$ and $w_3^{e}$, which are isolated and hence belong to distinct components. Similarly, the possible neighbors of $w_4^{e}$ in $S'$ are $w_1^{e}$, $w_3^{e}$, $u$, and $v$. The first two are isolated, and at most one of $u$ and $v$ belongs to $S'$, then all belong to distinct components.
Now, let $x\in V(G)\setminus S'$. Its neighbors in $S'$ can only be vertices $w_5^{e}$ corresponding to edges $e$ incident to $x$. By the description of the nontrivial components above, two such vertices $w_5^{e}$ and $w_5^{f}$ could belong to the same component only if the other endpoints of $e$ and $f$ coincide with the same vertex of $I\cap S'$. Since $e$ and $f$ are distinct edges incident to $x$, their other endpoints are distinct. Thus, the neighbors of $x$ in $S'$ belong to distinct components. Therefore, no vertex outside $S'$ has two neighbors in the same component of $H[S']$. Hence, $\hullc(S')=S'$. \hfill $\blacksquare$ 

\medskip

We now show that $S$ is $\Pi$-independent for each of the three parameters. First, for every $x\in S$, Claim~1 gives $\hullc(S\setminus\{x\})=S\setminus\{x\}$, and thus $x\notin\hullc(S\setminus\{x\})$. Hence, $S$ is convexly independent. Moreover, $\displaystyle\bigcap_{x\in S}\hullc(S\setminus\{x\})=\bigcap_{x\in S}(S\setminus\{x\})=\emptyset$, and hence $S$ is Helly-independent. Finally, consider any partition $S=S_1 \cup S_2$. By Claim~1, $\hullc(S_1)=S_1$ and $\hullc(S_2)=S_2$. Therefore, $\hullc(S_1)\cap\hullc(S_2)=\emptyset$, and hence $S$ is Radon-independent. Thus, $H$ has a $\Pi$-independent set of order at least $k+3m$.

\smallskip
Conversely, let $S$ be a $\Pi$-independent set of $H$ with $|S|\geq k+3m$. In all three cases, $S$ is convexly independent. Therefore, it suffices to consider convex independence in the remainder of the proof. The following claim bounds the number of vertices of $S$ that can be chosen from each gadget. Let us fix $S_G=S\cap V(G)$.

\medskip
\noindent \textit{Claim~2.} For every edge $e=uv\in E(G)$, it holds that:
\begin{enumerate}[(\roman*)]
\item if $|S_G \cap \{u,v\}|=2$, then $|S\cap W^{e}|\leq 2$;
\item if $|S_G \cap \{u,v\}|\leq 1$, then $|S\cap W^{e}|\leq 3$.
\end{enumerate}

\smallskip
\noindent \textit{Proof of Claim~2}. First, since $S$ is convexly independent, $S\cap(W^{e}\cup \{u,v\})$ is also convexly independent in  $H[W^{e}\cup\{u,v\}]$. In addition, since every pair of adjacent vertices in $\{w_1^{e},\dots,w_4^{e}\}$ forms a hull set of $H[\{w_1^{e},\dots,w_4^{e}\}]$, we get $|S\cap \{w_1^{e},\dots,w_4^{e}\}| \leq 2$ (I).

\begin{enumerate}[(\roman*)]
\item Suppose that $|S_G\cap \{u,v\}|=2$, that is, $u,v\in S_G$. Assume, by contradiction, that $|S\cap W^{e}|\geq3$. By (I) we have $w_5^{e}\in S$, then $w_4^e$ forms a cycle with $u,w_5,v$ and $w_4^e \in \hullc (\{u,w_5,v\})$. Thus, if $w_4^{e}\in S$, then $w_4^{e}\in\hullc(S\setminus\{w_4^{e}\})$, a contradiction. Hence, $w_4^{e}\notin S$, and consequently $|S\cap \{w_1^{e},w_2^{e},w_3^{e}\}|\geq2$. Let $x,y \in S\cap \{w_1^{e},w_2^{e},w_3^{e}\}$, with $x\neq y$. Since $w_4^{e} \in \hullc (\{u, w_5^{e},v\})$ and $\{x,w_4^{e}\}$ is a hull set of $H[\{w_1^{e},\dots,w_4^{e}\}]$, we have that $y\in\hullc(S\setminus\{y\}),$ a contradiction. 

\item Suppose that $|S_G\cap\{u,v\}| \leq 1$. If $|S\cap W^{e}| \geq 4$, then at least three vertices of $\{w_1^{e},\dots,w_4^{e}\}$ belong to $S$, contradicting (I). \hfill $\blacksquare$ 
\end{enumerate}

Next, we distinguish the edges of the induced subgraph $G[S_G]$ to establish an upper bound of the number of vertices in~$S$.
By Claim~2, every edge $e \in  E(G[S_G])$ contributes at most two vertices of $W^{e}$ to $S$, while every edge $e \notin E(G[S_G])$ contributes at most three. This implies that $|S\setminus S_G|\leq 2|E(G[S_G])| + 3\bigl(m-|E(G[S_G])| \bigr) =3m-|E(G[S_G])|,$ and therefore $|S|\leq 3m+|S_G|-|E(G[S_G])|.$

By choosing one endpoint of each edge of $G[S_G]$, we obtain a vertex cover of $G[S_G]$ of size at most $|E(G[S_G])|$. Consequently, $G[S_G]$ has an independent set of order at least $|S_G|-|E(G[S_G])|$, and thus $
\alpha(G)\geq |S_G|-|E(G[S_G])|.$
Since $|S|\geq k+3m$, we conclude that
$$
k+3m\leq |S|\leq 3m+|S_G|-|E(G[S_G])|\leq 3m+\alpha(G),
$$ and hence $\alpha(G)\geq k$. Therefore, $G$ has an independent set of order at least $k$. \hfill \qed
\end{proof}

It is worth mentioning that the $\W[1]$-hardness of the rank parameterized by the solution size, known for general graphs, does not extend to planar graphs. Indeed, every independent set is simultaneously Helly-, Radon-, and convexly independent in the cycle convexity. By the Four Color Theorem, every planar graph $G$ has an independent set of order at least $|V(G)|/4$. Therefore, for any $\Pi\in\{\text{Helly, Radon, convexly}\}$, if $|V(G)|\geq 4k$, then $G$ necessarily contains a $\Pi$-independent set of order at least $k$. Otherwise, $|V(G)|<4k$, and the corresponding parameter can be computed by exhaustive enumeration in time depending only on $k$. Consequently, all three problems are fixed-parameter tractable on planar graphs when parameterized by $k$. Thus, while the result of Theorem~\ref{theo:NPc} shows that the problems remain $\NP$-hard even for planar graphs of maximum degree at most $6$, this hardness cannot be strengthened to $\W[1]$-hardness with respect to the solution size, unless $\FPT=\W[1]$.

\section{Extremal values}

If $\pi(G)$ denotes any one of $\hel(G)$, $\rad(G)$, or $\rk(G)$, then $\pi(G)=n(G)$ if and only if $G$ is a forest. Motivated by this characterization, we investigate the next two extremal values. In particular, we characterize the graphs for which the Helly number, Radon number, and rank attain the values $n(G)-1$ and $n(G)-2$. Interestingly, these three parameters coincide in both cases. We begin with the case $\pi(G)=n(G)-1$.

\begin{theorem}\label{unicyclic}

 Let $G$ be a graph of order $n\geq 3$. Then the following are equivalent. 
 
 \begin{enumerate}
 \item[$(1)$] $\rk(G)=n-1$.
 \item [$(2)$]$\hel(G)=n-1$.
 \item[$(3)$] $\rad(G)=n-1$.
     \item[$(4)$] $G$ is a unicyclic graph.
 \end{enumerate}
\end{theorem}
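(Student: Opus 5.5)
We prove the equivalences by showing $(4)\Rightarrow(1),(2),(3)$ and then that each of $(1)$, $(2)$, $(3)$ forces $G$ to be unicyclic. Two facts from the Preliminaries will be used throughout. First, Helly or Radon independence implies convex independence: if $a\in\hullc(S\setminus\{a\})$, then $a$ lies in every $\hullc(S\setminus\{b\})$, and the partition $\{\{a\},S\setminus\{a\}\}$ is a Radon partition. Hence $\hel(G),\rad(G)\le \rk(G)$. Second, for any $\Pi$-independent set $S$, the induced subgraph $G[S]$ is a forest. All graphs are connected, as throughout the paper.

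\emph{$(4)\Rightarrow$ all three.} Let $G$ be unicyclic with unique cycle $C$. Since $G$ is not a forest, $\pi(G)\le n-1$. Pick any $x\in V(C)$ and set $S=V(G)\setminus\{x\}$; then $G[S]$ is a forest. We claim that every $S'\subseteq S$ is cycle convex. The only cycle of $G$ is $C$, so a vertex $w\notin S'$ lies in $\intv(S')$ only if $w$ lies on $C$ and all other vertices of $C$ lie in $S'$. Because $x\notin S'$, this happens only if $w=x$ and $V(C)\setminus\{x\}\subseteq S'$. In that case $x$ is generated, so the claim fails as stated.

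We therefore use a different $S$. Choose $S=V(G)\setminus\{y\}$ where $y$ is a vertex of $C$ whose removal still leaves $x\ne y$ on $C$ absent from... this does not help either, since we may remove only one vertex.

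The right move is to analyse directly, with $S=V(G)\setminus\{x\}$:
\begin{itemize}
\item[] \emph{Rank.} For $a\in S$, $\hullc(S\setminus\{a\})$ can add only $x$, and only when $S\setminus\{a\}\supseteq V(C)\setminus\{x\}$. Adding $x$ creates no further cycles through $a$, so $a$ is never generated and $S$ is convexly independent.
\item[] \emph{Helly.} For distinct $a,b\in V(C)\setminus\{x\}$ (possible since $|C|\ge3$), the hulls $\hullc(S\setminus\{a\})=S\setminus\{a\}$ and $\hullc(S\setminus\{b\})=S\setminus\{b\}$. Every vertex of $S$ is then missing from its own term, and $x$ is missing from the $a$-term. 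So the intersection is empty.
\item[] \emph{Radon.} For a partition $\{S_1,S_2\}$ of $S$, each $\hullc(S_i)\subseteq S_i\cup\{x\}$. Both contain $x$ only if both contain $V(C)\setminus\{x\}$, which is impossible because $S_1,S_2$ are disjoint and $V(C)\setminus\{x\}$ is nonempty.
\end{itemize}
Hence $\pi(G)=n-1$ for every $\pi$.

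\emph{Converse.} Since $\hel,\rad\le\rk$, it suffices to show that if $\pi(G)=n-1$ for some $\pi$, then $G$ is unicyclic. Take a $\Pi$-independent set $S$ with $|S|=n-1$ and write $V(G)=S\cup\{x\}$. By the second fact, $G[S]$ is a forest, and $G$ is not a forest, so $\pi(G)=n-1$ forces a cycle.

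Suppose $G$ is not unicyclic, so $G$ has at least two cycles, each passing through $x$. If $x$ had two neighbours in the same component of $G[S]$, then $x\in\intv(S\setminus\{a\})$ for suitable $a$ whenever $S\setminus\{a\}$ keeps such a pair connected. The key step is to show there is a vertex $a\in S$ generated by $(S\setminus\{a\})\cup\{x\}$ while $x\in\hullc(S\setminus\{a\})$.

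Since $G$ has two distinct cycles through $x$, the vertex $x$ has at least three neighbours joined through $G[S]$: either three neighbours in one tree component $T$, or two pairs of neighbours in two different components $T_1,T_2$.
\begin{itemize}
\item[] \emph{Two components.} Choose $a$ to be a neighbour of $x$ in $T_2$. The pair in $T_1$ generates $x$, and then $x$ together with the path in $T_2$ closes a cycle through $a$. Hence $a\in\hullc(S\setminus\{a\})$.
\item[] \emph{Three neighbours $p,q,r$ in one tree $T$.} Consider the paths in $T$ among $p,q,r$, which form a subdivided star or path with a median $c$. Pick $a$ to be a leaf-end, say $r$, unless this disconnects $p$ from $q$. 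Since $r$ is an endpoint of a branch, removing $r$ does not separate $p$ from $q$. Then $x$ is generated by $p,q$, and $r$ lies on the cycle $x\,r\cdots q\,x$ through generated and remaining vertices.
\end{itemize}
In both cases $S$ is convexly dependent, a contradiction. Thus $G$ is unicyclic.

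\emph{Main obstacle.} The delicate step is this last case analysis: one must choose $a$ so that deleting it keeps some pair of neighbours of $x$ connected in $G[S]$, while $a$ still lies on a cycle closed through $x$. Choosing $a$ to be the neighbour of $x$ that is extremal in the tree structure is the intended fix.
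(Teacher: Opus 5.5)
Your proof is correct. Its core step is the paper's: the missing vertex $x$ lies on every cycle, a cycle avoiding a suitably chosen $a\in S$ generates $x$, and a second cycle through $x$ then generates $a$. Where you differ is in organization and rigor.

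You first note that every Helly- or Radon-independent set is convexly independent. This lets you prove the ``only if'' direction once, for convex independence, and transfer it to $\hel$ and $\rad$. The paper instead gives a separate case analysis for the Helly number and defers the Radon case to ``similar arguments''. As written, the final case of that Helly analysis only places $u$ in a \emph{union} of hulls, which does not yield H-dependence.

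You describe the cycles through $x$ as pairs of neighbours of $x$ lying in a common tree of $G-x$. You then take $a$ to be an end of the subtree spanned by three such neighbours, or a neighbour of $x$ in a second tree. This justifies the paper's unexplained choice of $v\in V(C_1)\setminus V(C_2)$. It also correctly places $a$ in $\intv^2(S\setminus\{a\})$ rather than in $\intv(S\setminus\{a\})$.

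In the ``if'' direction, the paper asserts that the relevant hulls equal $S\setminus\{v\}$. That is false for $v$ off the cycle, since then $x$ is generated. You handle exactly this point: you use $a\in V(C)\setminus\{x\}$ to exclude $x$ from the Helly intersection, and the bound $\hullc(S_i)\subseteq S_i\cup\{x\}$ for Radon.

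The paper's per-parameter route mainly buys parallelism with its $n-2$ theorem; yours is shorter and uniform. In the write-up, delete the abandoned claim that every $S'\subseteq S$ is cycle convex and the aborted sentence about choosing a different $S$. The second vertex $b$ in the Helly step is also superfluous.
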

\begin{proof}

$(1)\iff (4)$

Suppose that $\rk(G)=n-1$. Then there exists a convexly independent set $S=V(G)\setminus\{u\}$ for some $u\in V(G)$. Since every forest has rank $n$, the graph $G$ contains a cycle. Suppose, to the contrary, that $G$ contains two distinct cycles, say $C_1$ and $C_2$. Since $G[S]$ is a forest, we have $u\in V(C_1)\cap V(C_2)$. Choose $v\in V(C_1)\setminus V(C_2)$. Then $v\in\intv(S\setminus\{v\})\subseteq\hullc(S\setminus\{v\})$, contradicting the convex independence of $S$. Therefore, $G$ contains exactly one cycle.

Conversely, suppose that $G$ has a unique cycle $C_m$, where $3\leq m\leq n$, and let $u\in V(C_m)$. Fix $S=V(G)\setminus\{u\}$. Then for any $v\in S$, $
v\notin\hullc(S\setminus\{v\})$, and so $S$ is convexly independent. Thus $\rk(G)\geq n-1$. On the other hand, since $G$ is not a forest, $\rk(G)\leq n-1$, and consequently $\rk(G)=n-1$.

 $(2) \iff (4)$

Assume that $\hel(G)=n-1$. Then, for some $u\in V(G)$, the set $S=V(G)\setminus\{u\}$ is $H-$independent in $G$. Since trees are characterized by helly number $n$, it follows that $G$ contains a cycle. Now, assume to the contrary that $G$ has at least two cycles, say $C_1$ and $C_2$. If $u\notin V(C_1)\cup V(C_2)$, then for every $v\in V(C_1)\cup V(C_2)$, we get $v\in \displaystyle\bigcap_{v\in S}\hullc (S\setminus\{v\})$, $S$ is $H$-dependent. Suppose that $u$ belong to exactly one cycle, say, $u\in V(C_1)\setminus V(C_2)$. Then, for every $v\in V(C_2)$, $v\in \displaystyle\bigcap\hullc (S\setminus\{a\})$ for each $a\in S$, again $S$ is $H$-dependent. Finally, consider $u\in V(C_1)\cap V(C_2)$. Then $\displaystyle u\in \bigcup_{v\in S}\hullc (S\setminus \{v\})$ and hence $S$ become $H$-dependent. Therefore, $G$ contains precisely one cycle and is therefore unicyclic.

Let $G$ be a unicyclic graph of order $n\geq 3$ with a cycle $C_m$, where $3\leq m \leq n$ and let $u\in V(C_m)$. Consider the set $S=V(G)\setminus \{u\}$. Then, for any $v\notin S$, $\hullc (S\setminus \{v\})=S\setminus\{v\}$ and hence it is $H$-independent in $G$ and thus, $\hel(G)\geq n-1$.

Using similar arguments as above, we can prove that $(3)\iff (4)$.
 
\qed
\end{proof}

Next, we turn our attention to characterizing the class of graphs attaining the extremal value $n(G)-2$. To this end, we introduce the following three graph classes.
 \begin{itemize}
    \item Let $\mathscr{F}_1$ denote the family of all bicyclic graphs.
  \item Let $\mathscr{F}_2$ denote the family of graphs obtained from a tree $T$ by adding two new adjacent vertices $x$ and $y$ and edges between $\{x,y\}$ and $V(T)$ such that $N(x)=N(y)=\{u,v\}$ for some $u,v\in V(T)$. 
\item Let $\mathscr{F}_3$ be the family of graphs obtained from $k\geq2$ pairwise vertex-disjoint trees $T_1,T_2,\ldots,T_k$ by adding two new vertices $x$ and $y$, which may be adjacent or non-adjacent, and adding edges between $\{x,y\}$ and $\displaystyle\bigcup_{i=1}^k V(T_i)$ such that the resulting graph is connected and each of $x$ and $y$ has at most one neighbour in each tree $T_j$, $1\leq j\leq k$. With this notation, we have the following result.
    \end{itemize}

\begin{theorem}
     Let $G$ be a graph of order $n\geq 4$. Then the following are equivalent.
     \begin{enumerate}
     \item [$(1)$] $\rk(G)=n-2$.
         \item [$(2)$]$\hel(G)=n-2$.
         \item [$(3)$]$\rad(G)=n-2$.
         \item [$(4)$]$G\in \displaystyle\bigcup_{i=1}^3 \mathscr{F}_i$.
     \end{enumerate}
      \end{theorem}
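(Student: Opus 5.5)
The plan is to reduce everything to the structure of a single complementary pair. First I record two general facts. Both Helly independence and Radon independence imply convex independence. Indeed, if $a\in\hullc(S\setminus\{a\})$, then $a$ lies in every $\hullc(S\setminus\{b\})$, so $S$ is $H$-dependent. Likewise the partition $\{\{a\},S\setminus\{a\}\}$ is a Radon partition, so $S$ is $R$-dependent. Hence $\hel(G),\rad(G)\leq \rk(G)$.

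Next, by Theorem~\ref{unicyclic} and the fact that $\pi(G)=n$ exactly for forests, any $G$ that is neither a forest nor unicyclic satisfies $\pi(G)\leq n-2$ for all three parameters. Moreover, no graph in $\bigcup_{i}\mathscr{F}_i$ is a forest or unicyclic. For $\mathscr{F}_2$ this is because $xuy$ and $xvy$ are triangles. For $\mathscr{F}_3$ I would need to check that the connectivity and "at most one neighbour per tree" conditions force at least two independent cycles, and I would add this as a remark, noting any degenerate members if they exist. So the theorem reduces to the following: $G\in\bigcup_i\mathscr{F}_i$ if and only if $G$ is neither a forest nor unicyclic and has a $\Pi$-independent set of the form $S=V(G)\setminus\{x,y\}$.

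For $(4)\Rightarrow(1),(2),(3)$, I exhibit such an $S$ and prove the strongest statement I can. Ideally this is the analogue of Claim~1 of Theorem~\ref{theo:NPc}: every subset of $S$ is cycle convex, which gives all three independences at once. The verification goes family by family.

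In $\mathscr{F}_3$, take $\{x,y\}$ as the removed pair. Then $G[S]=T_1\cup\dots\cup T_k$. Any $S'\subseteq S$ induces a forest whose components lie inside single trees $T_j$. A vertex of $T_j\setminus S'$ has at most one neighbour in each component of $T_j[S']$, since $T_j$ is a tree. Each of $x,y$ has at most one neighbour in each tree $T_j$. Hence $\intv(S')=S'$.

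In $\mathscr{F}_2$, take $S=V(T)$. Now $x$ and $y$ each have two neighbours $u,v$ in the same tree, so I will instead prove convex, Helly and Radon independence directly. For $a\in S$: if $\{u,v\}\not\subseteq S\setminus\{a\}$, nothing outside is generated. If $u,v\in S\setminus\{a\}$ but $a$ lies on the $u$--$v$ path of $T$, the two vertices sit in different components, so again nothing is generated. Otherwise $x$ and $y$ are generated, but they create no new cycle through $a$ beyond those already in $T$. A similar analysis of the two hulls handles Helly and Radon.

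For a bicyclic graph I choose $x$ and $y$ according to the three shapes of the cyclic core: dumbbell, figure-eight and theta. The choice makes $G-\{x,y\}$ a forest in which no outside vertex sees two vertices of one component after deleting a further vertex $a$. For instance, in the theta case I take $x,y$ to be the two branch vertices. I then argue as for $\mathscr{F}_3$.

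For the converse, let $S=V\setminus\{x,y\}$ be convexly independent. This suffices by the first paragraph, and it also yields $(2),(3)\Rightarrow(1)\Rightarrow(4)$. Assume also that $G$ is neither a forest nor unicyclic. I split on whether $G[S]$ is a tree or has $k\geq2$ components.

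Suppose some $z\in\{x,y\}$ has two neighbours $p,q$ in the same component $T$ of $G[S]$. Then for any $a\in S$ off the $p$--$q$ path of $T$, $z$ is generated by $S\setminus\{a\}$. The aim is to show that then, unless $G$ is bicyclic, some $a$ gets regenerated, contradicting independence. I would try: a third neighbour of $z$ in $T$, or the other vertex $y$ also attaching to $T$ in a way creating a second cycle, yields a cycle through $a$ using generated vertices. This should force either the bicyclic case, or exactly $N(x)=N(y)=\{u,v\}$ with $xy\in E$, i.e.\ $\mathscr{F}_2$. If no vertex of $\{x,y\}$ has two neighbours in a single component, we are exactly in $\mathscr{F}_3$, with $k\geq2$ by connectivity and non-unicyclicity.

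The main obstacle is this converse case analysis. One has to pin down precisely when a vertex $z\in\{x,y\}$ with two neighbours in one tree (together with the edge $xy$) allows regeneration of some $a\in S$. One must also show that all surviving configurations are bicyclic or of type $\mathscr{F}_2$. I would handle it by counting the cycle space: the number of edges from $\{x,y\}$, plus $[xy\in E]$, minus the number of components joined. I would show that a cyclomatic number of at least $3$ with two same-tree neighbours always produces a generated vertex in $S$.
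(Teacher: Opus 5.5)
\textbf{Two steps fail, because the statement itself is false as written.} Your framework matches the paper's: reduce to a convexly independent $S=V(G)\setminus\{x,y\}$ using $\hel,\rad\le\rk$, then split on whether a removed vertex has two neighbours in one component of $G[S]$. The two points you leave open are exactly where it breaks.

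First, $\mathscr{F}_3$ does have degenerate members. Take $T_1=\{a\}$, $T_2=\{b\}$ and non-adjacent $x,y$ both joined to $a$ and $b$: this gives $C_4$, with $\rk(C_4)=3=n-1$. Taking $x\sim a$, $y\sim b$, $x\sim y$ gives $P_4$. So your sentence ``no graph in $\bigcup_i\mathscr{F}_i$ is a forest or unicyclic'' is false. $(4)\Rightarrow(1)$ holds only once $\mathscr{F}_3$ is restricted to graphs with at least two cycles; for those, your proof that every subset of $S$ is cycle convex is fine.

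Second, and more seriously, your converse dichotomy (bicyclic or $\mathscr{F}_2$) is false, as is your claim that cyclomatic number at least $3$ with two same-tree neighbours forces regeneration. Let $G$ arise from $K_4$ on $\{u,v,x,y\}$ by subdividing $uv$ with a vertex $w$ and attaching a pendant vertex $t$ to $x$. Put $S=\{u,w,v,t\}$. Then $n=6$, $|E(G)|=8$, and $x$ has two neighbours $u,v$ in the component $\{u,w,v\}$ of $G[S]$. Yet the only subsets of $S$ that are not cycle convex are those containing $\{u,w,v\}$, and $\hullc(\{u,w,v\})=\{u,w,v,x,y\}\not\ni t$. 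So $S$ is convexly, Helly and Radon independent. By Theorem~\ref{unicyclic}, $\rk(G)=\hel(G)=\rad(G)=n-2$.

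However, $G$ lies in none of the three families:
\begin{itemize}
\item It is not bicyclic.
\item It is not in $\mathscr{F}_2$. Its only adjacent pairs of degree-$3$ vertices are $\{u,y\}$ and $\{v,y\}$, and in each pair the remaining neighbourhoods differ.
\item It is not in $\mathscr{F}_3$. There every cycle passes through both added vertices, but the cycles $xuwv$, $yuwv$, $xuy$, $xvy$ have no common vertex.
\end{itemize}

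Your regeneration heuristic misses the following. The third cycle runs through both twins, which enter a hull only when the whole $u$--$v$ path is present. The vertex then missing, $t$, lies on no cycle.

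The paper's proof has the same holes. It calls $(4)\Rightarrow(1)$ straightforward, and it omits Case~2 with $k\ge2$ as ``analogous'', which is exactly where this example lives. Any correct version must restrict $\mathscr{F}_3$ as above and enlarge $\mathscr{F}_2$ to allow trees joined by a single edge to one of the twins, with no tree joined to both. All such graphs have $\rk=\hel=\rad=n-2$.

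\textbf{A smaller point: dumbbells and figure-eights.} Your claimed property (no outside vertex sees two vertices of one component of $G[S\setminus\{a\}]$) fails for every choice of $\{x,y\}$. Two cycles sharing at most one vertex cannot both contain $x$ and $y$, so some cycle $C$ contains only one of them, say $x$. Deleting a vertex $a$ of the other cycle lying outside $C\cup\{y\}$ leaves $x$ with two neighbours in one component. So arguing ``as for $\mathscr{F}_3$'' is impossible there.

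Instead, use that a vertex enters a hull only through a cycle all of whose other vertices are already in it. For a dumbbell take $x\in C_1$, $y\in C_2$; for a figure-eight take $x$ the common vertex and $y\in C_1\setminus\{x\}$. Then every cycle through $a$ contains a removed vertex that never enters $\hullc(S\setminus\{a\})$. Helly and Radon independence then need their own (easy) check.
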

\begin{proof}

$(1)\implies (4)$

Let $G$ be a graph of order $n$ with $\rk(G)=n-2$, and let $S$ be a maximum convexly independent set such that $V(G)\setminus S=\{x,y\}$. Let $T_1,T_2,\ldots,T_k$, $k\geq1$, denote the components of the induced subgraph $G[S]$. Since every convexly independent set induces a forest, each $T_i$ is a tree. We consider the following two cases.\\
\textbf{Case 1:} $xy\notin E(G)$.
First, suppose that $k=1$. Then $G[S]$ is a tree. Since $G$ is connected and $S$ is a maximum convexly independent set, it follows that both $x$ and $y$ have at least two neighbours in $S$. Suppose that $x$ has three distinct neighbours $x_1,x_2,x_3$ in $S$. Then $x_1\in\intv^2(S\setminus\{x_1\})\subseteq\hullc(S\setminus\{x_1\})$, contradicting the convex independence of $S$. Thus, $x$ has exactly two neighbours in $T_1$. Similarly, $y$ has exactly two neighbours in $T_1$. Consequently, $G$ is bicyclic, and hence $G\in\mathscr{F}_1$.

Next, assume that $k\geq2$. $G$ contains at least two cycles. Let $C_1,C_2,\ldots,C_\ell$ be the distinct cycles of $G$. Since every convexly independent set induces a forest, it is clear that every cycle of $G$ must contain at least one of $x$ and $y$.

Suppose first that $V(C_i)\cap\{x,y\}=\{x\}$ for some $i$. It follows that $V(C_j)\cap\{x,y\}\neq \{x\}$ for every $j\neq i$; otherwise, $S$ would be convexly dependent. Without loss of generality, let $i=1$ and assume that $V(C_1)\setminus\{x\}\subseteq V(T_1)$. As in the preceding case, each of $x$ or $y$ has at most one neighbour in each component $T_i$, $1\leq i\leq k$. Consequently, $V(C_j)\cap V(T_1)=\emptyset$ for every $j\geq2$. Now, suppose that there exists a cycle, say $C_2$, such that $x\in V(C_2)$. Then $V(C_2)\cap\{x,y\}=\{x,y\}$. In this case, $G$ must be bicyclic and so $G\in\mathscr{F}_1$. Otherwise, let $C_3$ be another cycle and let $u$ be a neighbour of $y$ on $C_3$. Then $u\in\intv^2(S\setminus\{u\})\subseteq\hullc(S\setminus\{u\})$, contradicting the convex independence of $S$. Hence we can assume that $V(C_j)\cap\{x,y\}=\{y\}$ for every $j\geq2$. By applying the preceding argument with the roles of $x$ and $y$ interchanged, we conclude that there is exactly one cycle containing $y$. Thus, in this case, $G$ also has exactly two cycles and so $G\in\mathscr{F}_1$.

Henceforth, we may assume that every cycle $C_i$ of $G$ contains both $x$ and $y$. Consequently, each of $x$ and $y$ has at most one neighbour in every component $T_j$, $1\leq j\leq k$. Together with the connectivity of $G$, these observations imply that $G\in\mathscr{F}_3$.\\
\textbf{Case 2:} $xy\in E(G)$. As in Case 1, we first consider $k=1$. By analogous arguments, both $x$ and $y$ have exactly two neighbours in $T_1$. If $|N(x)\cup N(y)|\geq3$, then, for any $u\in N(x)$, we have $u\in \hullc(S\setminus\{u\})$, contradicting the convex independence of $S$. Hence, $N(x)=N(y)$. Consequently, $G\in\mathscr{F}_2$. The rest of the proof is also analogous to that of Case 1 and is therefore omitted.

It is straightforward to verify that every graph belongs to the family $\displaystyle\bigcup_{i=1}^3\mathscr{F}_i$ satisfies $\rk(G)=n(G)-2$. 

$(2)\iff (4)$

Let $G$ be a graph with $\hel(G)=n-2$. Then there exists an $H$-independent set $S\subseteq V(G)$ with $|S|=n-2$. Let $S=V(G)\setminus\{x,y\}$ for some $x,y\in V(G)$. Since every $H$-independent set induces a forest, $S$ induces a forest in $G$. Let $T_1,T_2,\ldots,T_k$ be the trees of this forest.

\textbf{Case 1:} $xy\notin E(G)$.

First assume $G[S]$ is a tree. If $x$ or $y$ has more than two neighbours in $S$, then $S$ is $H$-dependent. Suppose that $x$ has two neighbours in $S$ and $y$ has exactly one neighbour in $S$. Then $G$ is a unicyclic graph, and hence, by Theorem~\ref{unicyclic}, $\hel(G)=n-1$. Similarly, if $y$ has two neighbours in $S$ and $x$ has exactly one neighbour in $S$, then $\hel(G)=n-1$. So $S$ is Helly independent only when $x$ and $y$ has exactly two neighbours in $G[S]$. Therefore $G\in \mathscr{F}_1$.

Let $G[S]$ is a forest with components $T_1,T_2,\ldots,T_k$. As in the previous case, neither $x$ nor $y$ can have more than two neighbours in two or more distinct components of $S$. Then, Theorem \ref{unicyclic} implies that $G$ contains at least two cycles, say $C_1,C_2,\ldots,C_l$. It follows that every cycle of $G$ must include at least one of the vertices $x$ and $y$. First, assume that $\{x\}=V(C_i)\cap \{x,y\}$ for some  $i\in \{1,2,3,\ldots,l\}$, while $\{y\}= V(C_j)\cap \{x,y\}$ for some $j\neq i$. Then, for any vertex $u\in S, \hullc(S\setminus\{u\})=S\setminus\{u\}$ and is Helly independent in $G$ having exactly two cycles and in this case $G\in \mathscr{F}_1$.

Now, suppose that $G$ contains a cycle $C_i$ for which $\{x\}= V(C_i)$ and another cycle $C_j$ with $V(C_j)\cap \{x,y\}=\{x,y\}$. In this case also $S$ is $H$-independent and $G$ is bicyclic. On the other hand beside $C_i$ and $C_j$ there is a cycle $C_k$ with $y$ has a neighbour, say $u$ in $C_k$. Then, it follows that $u\in \displaystyle\bigcap_{a\in S}\hullc(S\setminus\{a\})$, a contradiction to the Helly independence of $S$. Therefore, $G$ is bicyclic and hence $G\in \mathscr{F}_1$. Hereafter, we may assume that both $x$ and $y$ belong to every cycle of $G$. Then, for each $u\in S, \hullc(S\setminus\{u\})=S\setminus\{u\}$ and hence $S$ is $H$-independent. These observations along with the connectivity of $G$ imply that $G\in \mathscr{F}_3$.\\

\textbf{Case 2:} $xy\in E(G)$. If $G[S]$ is a tree, by the same arguments in Case 1, each of $x$ and $y$ has exactly two neighbours in $G[S]$. If $|N(x)\cup N(y)|\geq 3,$ then for some  $u\in N(x)$, $u\in \displaystyle\bigcap_{a\in S}\hullc(S\setminus\{a\})$, implies that $S$ is $H$-dependent, a contradiction. Therefore, $N(x)=N(y)$ and consequently, $G\in \mathscr{F}_2$. Since the remaining arguments are similar to that of Case 1, we omit the proof.\\
$(1)\implies (4)$ is straight forward and hence we exclude the proof. 

By using the similar proof technique we get  $(3)\iff (4)$.
\qed
\end{proof}

\begin{figure}[ht]
\centering

\begin{tikzpicture}[
    vertex/.style={
        circle,
        draw=black,
        fill=black,
        inner sep=1.6pt
    },
    lab/.style={
        font=\normalsize
    }
]


\node[vertex] (a1) at (0,1.6) {};
\node[vertex] (a2) at (0,0) {};
\node[vertex] (a3) at (2.1,0) {};
\node[vertex] (a4) at (2.1,1.6) {};

\node[vertex] (a5) at (4.0,0.8) {};

\node[vertex] (a6) at (5.9,1.6) {};
\node[vertex] (a7) at (5.9,0) {};
\node[vertex] (a8) at (8.0,0) {};
\node[vertex] (a9) at (8.0,1.6) {};


\node[lab, above left] at (a1) {$a_1$};
\node[lab, below left] at (a2) {$a_2$};
\node[lab, below] at (a3) {$a_3$};
\node[lab, above] at (a4) {$a_4$};

\node[lab, below, yshift=-4pt] at (a5) {$a_5$};

\node[lab, above left] at (a6) {$a_6$};
\node[lab, below left] at (a7) {$a_7$};
\node[lab, below] at (a8) {$a_8$};
\node[lab, above] at (a9) {$a_9$};


\draw (a1)--(a2);
\draw (a2)--(a3);
\draw (a3)--(a4);
\draw (a4)--(a1);
\draw (a1)--(a3);
\draw (a2)--(a4);


\draw (a3)--(a5);
\draw (a4)--(a5);
\draw (a5)--(a6);
\draw (a5)--(a7);


\draw (a6)--(a7);
\draw (a7)--(a8);
\draw (a8)--(a9);
\draw (a9)--(a6);
\draw (a6)--(a8);
\draw (a7)--(a9);

\end{tikzpicture}

\caption{A graph satisfying $\hel(G)=\rad(G)\neq\rk(G)$.}
\label{fig:example}

\end{figure}

\section{Conclusion}

Here we studied the Helly number, Radon number, and rank of graphs with respect to cycle convexity, obtaining structural characterizations of their extremal and near-extremal values. In particular, we characterized the graphs attaining the values $n-1$ and $n-2$, and, interestingly, the three parameters coincide in both cases. Moreover, we observe that if $\rk(G)\leq 3$, then $\hel(G)=\rad(G)=\rk(G)$. Since the proof is straightforward, we omit the details. However, this property does not extend to larger values: The graph in Figure~\ref{fig:example} satisfies $\hel(G)=\rad(G)=3<4=\rk(G)$. The set $S=\{a_1,a_2,a_8,a_9\}$ is convexly independent. However, since $a_5\in\displaystyle\bigcap_{a_i\in S}\hullc(S\setminus\{a_i\})$, the set $S$ is Helly dependent. Similarly, $S$ is Radon dependent. On the other hand, every four-element set of vertices distinct from $S$ is both Helly dependent and Radon dependent, by the fact that the subgraph induced by any independent set is a forest, as established above. Thus, \(4\) is the smallest value at which the three parameters can differ. Our observations also suggest the following conjecture.

\begin{conjecture}
For every connected graph \(G\) of order \(n\geq 2\), we have
$\hel(G)=\rad(G)$.
\end{conjecture}
 Determining whether the Helly number and Radon number always coincide, and understanding the possible gap between these parameters and the rank, are natural directions for future research.
 
From the computational perspective, we established hardness results for the corresponding threshold decision problems, which remain hard even for planar graphs of bounded maximum degree. These results enhance our understanding of the structural and computational aspects of cycle convexity and provide directions for further study.

 \subsubsection*{Acknowledgment:} Revathy S.Nair acknowledges the financial support from the University of Kerala, for providing the University Junior Research Fellowship (AcEVI 3217/2025/UOK dated 10/04/2025).\\

\bibliographystyle{amsplain}
\bibliography{cycle}

\end{document}